\newtheorem{theorem}{Theorem}
\newtheorem{lemma}[theorem]{Lemma}
\newtheorem{definition}{Definition}
\newtheorem{assumption}{Assumption}
\newtheorem{proof}{Proof}
\begin{document}

\title{
A Risk Aware Two-Stage Market Mechanism for Electricity with Renewable Generation
}%

\author{Nathan Dahlin and Rahul Jain
\thanks{The authors are with the ECE Department at the 
        University of Southern California. Email:
        {\tt\small \{dahlin,rahul.jain\}@usc.edu}. This work was supported by NSF Award ECCS-1611574. }%
}

\maketitle
\vspace{-0.5in}
\begin{abstract}

Over the last few decades, electricity markets around the world have adopted multi-settlement structures, allowing for balancing of supply and demand as more accurate forecast information becomes available. Given increasing uncertainty due to adoption of renewables, more recent market design work has focused on optimization of expectation of some quantity, e.g. social welfare. However, social planners and policy makers are often \textit{risk averse}, so that such \textit{risk neutral} formulations do not adequately reflect prevailing attitudes towards risk, nor explain the decisions that follow. Hence we incorporate the commonly used risk measure \textit{conditional value at risk} (CVaR) into the central planning objective, and study how a two-stage market operates when the individual generators are risk neutral. Our primary result is to show existence (by construction) of a sequential competitive equilibrium (SCEq) in this risk-aware two-stage market. Given equilibrium prices, we design a market mechanism which achieves social cost minimization assuming that agents are non strategic. 

\end{abstract}

\section{Introduction}

Electricity markets covering the majority of the US, and most of the industrialized world are operated as multi-settlement markets. These markets are organized in the sense that demand for and supply of energy and ancillary services are matched via a centralized auction mechanism, as opposed to bilateral negotiations over individual transactions \cite{GERARD201819}, \cite{DBLP:conf/hicss/MoyeM18}. An independent system operator (ISO) runs a given market as a series of multi stage forward markets, and a real time or spot market. Depending upon the region, forward markets are settled days or hours ahead of the intended time of delivery, allowing for provision of cheaper, but relatively inflexible generation. Spot markets open and are settled minutes before delivery in order to balance supply and demand in real time. 

While the clearing mechanisms in electricity markets are designed with the objective of maximizing welfare of both producers and consumers, the imperative to increase penetration of renewables and reduce reliance on fossil fuels now strains the existing multi-settlement paradigm. In the past, the primary source of uncertainty in market clearing came from demand side deviations, and such errors reduced to under 5\% by the opening of spot markets \cite{forecasterror}. Current levels of renewable adoption have already exacerbated typical levels of uncertainty for net demand (demand minus renewable generation), and under official mandates to increase the proportion of energy sourced from renewables this trend will only continue. For example, in California renewables constitute nearly 34\% of total retail sales, while a recently passed state bill legislates that 100\% of power come from renewables by 2045 \cite{CECreportJan2019}, \cite{cal100percent}. Reliance on renewables to this degree introduces an order of magnitude greater uncertainty in net demand, and necessitates novel market designs to address this challenge. 

As a starting point, given that economic dispatch is a multi-settlement process, it makes sense to couple markets across forward and real time stages, and then allow for recourse decisions, rather than settle each stage independently. If a probability distribution for the stochastic generation is known, then maximization of \textit{expected} welfare is a reasonable objective. This type of problem can be formulated as a two-stage stochastic program, and in fact it is possible to show that stochastic clearing is more efficient than two-settlement systems \cite{GERARD201819}, \cite{morales2013integrating}. 

There are a couple of issues with the use of expected social welfare as an objective function. In purely mathematical terms, a given realization of a random variable can be quite different from its expectation. Thus optimization of an expectation guarantees little in terms of variation over possible outcomes. Further, real-world observations indicate that economic decision makers are risk averse, or at least act so \cite{bublitz2019survey}. Therefore, given increasing levels of generation variability, it is of both theoretical and practical interest to incorporate some notion of risk into market objective functions. 

In this paper we study how the introduction of risk preferences into the central objective function affects market operation. We consider a setting with an ISO and multiple generators. The ISO owns a nondispatchable, renewable resource, and the market clears in two stages: a forward stage in which only a forecast for renewable generation is available, and a real time stage, wherein the exact realized renewable generation is known. The generators each own primary and ancillary plants, which may be dispatched in the forward and real time stages, respectively. In the forward market, the ISO schedules primary energy procurements from the generators, and in the real time market purchases ancillary service where necessary. All participants are assumed to be non-strategic \textit{price takers}. However, while we assume that the generators seek to maximize their expected profit, the ISO is risk averse and minimizes a weighted sum of the expectation and \textit{conditional value at risk} (CVaR) of its costs. CVaR has over the past two decades become the most widely used risk measure, due to the fact that it is a \textit{coherent} risk measure, and can be calculated via a convex program \cite{de2017investment}. 

Our main result is the proof of existence of a sequential competitive equilibrium (SCEq) in this risk aware, two-stage market with recourse. In particular, we demonstrate the existence of first and second stage prices such that, given these prices, the generation decisions of the generators in both decisions achieve market clearance in stage two, thus balancing supply and demand. We then specify a two-stage market mechanism which implements the SCEq.

\textbf{Related work.} Numerous past works have studied market and mechanism design and equilibrium outcomes in the two-stage expected welfare maximization, or \textit{risk neutral} setting, e.g., \cite{dahlin2019twostage}, \cite{wannegkowshameysha11b} and \cite{xu2015efficient}.

Turning to literature which incorporates risk preferences, several works consider settings in which agents may enter into contracts in order to hedge against risky outcomes. In \cite{ralph2015risk} it is shown that a \textit{complete} market, wherein all uncertainties can be addressed via a balanced set of contracts, involving agents equipped with coherent risk measures, is equivalent to one in which said agents are risk neutral, and take actions based on a probability density function determined by a system risk agent. The work then investigates necessary and sufficient conditions for existence of an equilibrium consisting of allocations, prices and contracts. Assuming a similar setting in the context of hydro thermal markets, \cite{philpott2016equilibrium} shows that given a sufficiently rich set of securities are available to risk averse agents, that a multi-stage competitive equilibrium may be derived from the solution to a risk-averse social planning solution. \cite{GERARD201819} investigates difficulties that may arise when risk averse agents maximize their welfare in a market are not complete, including existence of multiple, potentially stable equilibria. Our setting differs from these works in that we have one risk aware customer for multiple risk neutral producers, and we do not allow for transactions between agents outside of the quantities of energy purchased and consumed. 

\section{Preliminaries}

\subsection{Risk Measures}
In stochastic optimization we are concerned with losses $Z(\omega) = L(x,\omega)$ that are both a function of a decision $x$, as well as some random outcome $\omega$, unknown when the decision is made. Generally speaking, a risk measure is a functional which accepts as input the entire collection of realizations $Z(\omega)$, $w\in\Omega$. 

More specifically, consider a sample space $(\Omega,\mathcal{F})$ equipped with sigma algebra $\mathcal{F}$, on which random functions $Z=Z(\omega)$ are defined. A \textit{risk measure} $\rho(Z)$ maps such random functions into the extended real line \cite{ShapDentRusz09}. Often times the domain of $\rho$, denoted $\mathcal{Z}$ is taken as $\mathcal{L}_p(\Omega,\mathcal{F},P)$ for some $p\in[1,+\infty)$ and reference probability measure $P$. The following characteristics of risk measures will become useful in later sections. 

\begin{definition}
A \textit{proper} risk measure satisfies $\rho(Z)>-\infty$ for all $Z\in \mathcal{Z}$ and $$\text{dom}(\rho):=\{Z\in\mathcal{Z}\,:\,\rho(Z)<\infty\}.$$
\end{definition}
We denote by $Z\succeq Z'$ the pointwise partial order, meaning $Z(\omega)\geq Z'(\omega)$ for a.e. $\omega\in \Omega$. 
\begin{definition}
A risk measure is \textit{monotonic} if $Z,Z'\in\mathcal{Z}$ and $Z\succeq Z'$ implies $\rho(Z)\geq\rho(Z')$.\end{definition}
\begin{definition}
A risk measure is \textit{coherent} if it is monotonic, convex and satisfies translation equivariance and positive homogeneity (see \cite{ShapDentRusz09} for details on these properties).\end{definition}

\subsection{Conditional value at risk}
In the following sections, we will focus in particular on conditional value at risk, or CVaR. CVaR is an example of a coherent risk measure \cite{ShapDentRusz09}. Before defining CVaR, we introduce the related quantity, value at risk. 

Suppose that random variable $Z$ is distributed according to Borel probability measure $P$, with associated sample space $(\Omega,\mathcal{F})$, and cdf $F$. When $Z$ represents losses, the $\alpha\text{-\textit{Value-at-Risk}}$ is defined as follows.
\begin{definition}\label{VaRdef}
For a given confidence level $\alpha\in(0,1)$, the $\alpha\textit{-Value-at-Risk}$ or  $\text{VaR}_{\alpha}$ of random loss $Z=Z(\omega)$ is  
\begin{equation}
\text{\normalfont{VaR}}_{\alpha}(Z) = \min\{z\,:\,F(z)\geq \alpha\}.
\end{equation}
\end{definition}

Thus, $\text{VaR}_{\alpha}(Z)$ is the lowest amount $z$ such that, with probability $\alpha$, $Z$ will not exceed $z$. In the case where $F$ is continuous, $\text{VaR}_{\alpha}(Z)$ is the unique $z$ satisfying $F(z) = \alpha$. Otherwise, it is possible that the equation $F(z) = \alpha$ has no solution, or an interval of solutions, depending upon the choice of $\alpha$. This, among other difficulties, motivates the use of alternative risk measures such as CVaR \cite{rockafellar2002conditional}.

Informally $\text{CVaR}_{\alpha}$ of $Z$ gives the expected value of $Z$, given that $Z\geq \text{VaR}_{\alpha}(Z)$. 
The precise definition is as follows. Let $[x]_+ = \max\{0,x\}$. 
\begin{definition}\label{CVaRdef}\cite{rockafellar2002conditional}
Let \begin{equation}\phi_{\alpha}(Z,\zeta) = \zeta + \frac{1}{1-\alpha}\mathbb{E}[Z-\zeta]_+.\end{equation}
Then $\text{CVaR}_{\alpha}(Z) = \min_{\zeta}\,\phi_{\alpha}(Z,\zeta)$, and \begin{equation}\text{VaR}_{\alpha}(Z)=\text{lower endpoint of }\underset{\zeta}{\arg\min}\,\phi_{\alpha}(Z,\zeta).\end{equation}
\end{definition}

It follows from the joint convexity of $\phi_{\alpha}$ in $Z$ and $\zeta$ that $\text{CVaR}_{\alpha}$ is convex over $\mathcal{Z}$. Restricting attention to random losses $Z(\omega) = L(x,\omega)$ which depend upon a decision $x$, we have the following result. 

\begin{theorem}\label{CVaRconvexinx} Let $Z(\omega) = L(x,\omega)$. If the mapping $x\mapsto Z$ is convex in $x$ then $\text{CVaR}_{\alpha}(Z)$ is convex in $x$ \cite{rockafellar2002conditional}.
\end{theorem}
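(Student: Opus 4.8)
The plan is to exhibit $\text{CVaR}_{\alpha}$ of the loss as the partial minimization over the scalar $\zeta$ of a function that is \emph{jointly} convex in $(x,\zeta)$, and then invoke the elementary fact that minimizing a jointly convex function over one of its arguments produces a convex function of the remaining argument. Accordingly, first I would introduce the auxiliary function
\[
g(x,\zeta) \;:=\; \phi_{\alpha}\big(L(x,\cdot),\zeta\big) \;=\; \zeta + \frac{1}{1-\alpha}\,\mathbb{E}\big[\,L(x,\omega)-\zeta\,\big]_+ ,
\]
which under the standing measurability and integrability hypotheses on $L$ is well defined and finite, and which satisfies $\text{CVaR}_{\alpha}\big(L(x,\cdot)\big) = \min_{\zeta} g(x,\zeta)$ by Definition~\ref{CVaRdef}.

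Next I would establish joint convexity of the integrand. Fix $\omega$: the map $(x,\zeta)\mapsto L(x,\omega)-\zeta$ is jointly convex, being the sum of $x\mapsto L(x,\omega)$ (convex by hypothesis) and the affine map $\zeta\mapsto-\zeta$. Since $r\mapsto [r]_+ = \max\{0,r\}$ is convex and nondecreasing on $\mathbb{R}$, the composition $(x,\zeta)\mapsto [\,L(x,\omega)-\zeta\,]_+$ is jointly convex for each $\omega$. Taking expectations is a monotone linear operation, so $(x,\zeta)\mapsto \mathbb{E}[\,L(x,\omega)-\zeta\,]_+$ is jointly convex; multiplying by $\tfrac{1}{1-\alpha}>0$ and adding the affine term $\zeta$ leaves $g$ jointly convex on its domain (this last string of reductions is essentially the joint convexity of $\phi_{\alpha}$ noted in the text, specialized to $Z=L(x,\cdot)$).

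Finally, I would carry out the partial-minimization step: for $x_0,x_1$ in the domain, $\lambda\in[0,1]$, $x_{\lambda}:=\lambda x_0+(1-\lambda)x_1$, and any near-minimizers $\zeta_0,\zeta_1$ of $g(x_0,\cdot)$ and $g(x_1,\cdot)$, the point $\lambda\zeta_0+(1-\lambda)\zeta_1$ is an admissible choice of $\zeta$ at $x_{\lambda}$, so that
\[
\text{CVaR}_{\alpha}\big(L(x_{\lambda},\cdot)\big) \;\le\; g\big(x_{\lambda},\,\lambda\zeta_0+(1-\lambda)\zeta_1\big) \;\le\; \lambda\,g(x_0,\zeta_0) + (1-\lambda)\,g(x_1,\zeta_1),
\]
and letting $\zeta_0,\zeta_1$ tend to the respective minima yields convexity of $x\mapsto\text{CVaR}_{\alpha}(L(x,\cdot))$. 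I expect the only point requiring care to be verifying that $g$ is everywhere finite-valued, so that the minimization defining $\text{CVaR}_{\alpha}$ is over a proper convex function and the ``$\min$'' in Definition~\ref{CVaRdef} is legitimately attained; this is handled by the assumption that $L(x,\cdot)\in\mathcal{L}_1$ together with continuity of $L(\cdot,\omega)$. Everything else is the composition rule for a convex nondecreasing outer function with a convex inner function, plus the standard infimal-projection argument above; no deeper properties of CVaR (coherence, the VaR characterization) are needed.
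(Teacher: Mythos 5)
Your proof is correct and follows essentially the same route the paper relies on: the paper invokes the joint convexity of $\phi_{\alpha}$ in $(Z,\zeta)$ and cites Rockafellar--Uryasev for the conclusion, and your argument simply fills in the two standard ingredients behind that citation (the composition rule giving joint convexity of $(x,\zeta)\mapsto\phi_{\alpha}(L(x,\cdot),\zeta)$, and the infimal-projection step over $\zeta$). No gaps; the care you note about finiteness of $g$ under the $\mathcal{L}_1$ assumption is exactly the right caveat.
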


Theorem \ref{CVaRconvexinx} will later ensure that optimization problems with objectives including a $\text{CVaR}_{\alpha}$ term are convex. 

\section{Risk Aware Stochastic Economic Dispatch Formulation}
We consider a setting with $N$ conventional generators, and a single renewable generator. An additional entity, the independent system operator (ISO) operates the power grid and plays the role of the social planner (from this point we use the terms interchangeably). For simplicity we consider a single bus network.

We consider a two-stage setting, where generation is dispatched in the first stage (also referred to as day-ahead or DA) and then adjusted in the second stage (real time or RT) to match demand. 

Let $D\geq0$ denote the aggregate demand. This demand is assumed \textit{inelastic}, i.e., it is not affected by changes in first or second stage prices. 

The renewable generator's output is modeled as a nonnegative random variable $W$, upper bounded by $\overline{W}>0$. We make the following additional assumption on the distribution of $W$. 

\begin{assumption}\label{Wassumption}
Random variable $W$ is distributed according to pdf $f_W$ (and cdf $F_W$), which is continuous and positive on $[0,\overline{W}]$. 
\end{assumption}

The probability distribution of $W$ is assumed to be known to all market participants. The marginal cost of renewable generation is zero. The quantity of renewable generation scheduled is denoted $y$. 

Conventional generator $i$ has access to a primary plant and an ancillary plant. Generator $i$ schedules its primary plant to produce quantity $x^G_i\in\mathbb{R}_+$ prior to realization of $W$ at cost $a_i(x^G_i)^2$ where $a_i>0$. We assume the primary plant is inflexible, so that its generation level must remained fixed once it is scheduled. After realization of $W$, generator $i$ can activate its ancillary plant to produce level $z^G_i\in\mathbb{R}_+$ at cost $\tilde{a}_i(z^G_i)^2$ where $\tilde{a}_i>0$. Any ancillary generation produced in excess of aggregate demand $D$ can be disposed of or sold in a separate spot market, which we do not consider. We assume that $a_i<\tilde{a}_i$ for all $i$, $a_i\neq a_j$ for $i\neq j$, and that $\max_ia_i<\min_i\tilde{a}_i$ and $\tilde{a}_i\neq \tilde{a}_j$ for $i\neq j$.

The generator is compensated for its first stage production $x^G_i$ at price $P_1$. In the second stage, given $W=w$, the generator is compensated for second stage generation $z^G_i(w)$ at price $P_2(w)$. 

\subsection{Generator's Problem}
We assume that each generator $i$ is \textit{price taking}, i.e., its decisions $x^G_i$ and $z^G_i(w)$ do not affect prices in either stage. Therefore, generator $i$'s profit is given by 
\begin{equation}\label{genprofit}\begin{split}\pi^G_i(x^G_{i},z^G_i(w)) := P_1&x^G_i - a_i(x^G_i)^2\\
 &+ P_2(w)z^G_i(w) - \tilde{a}_i(z^G_i(w))^2.
\end{split}
\end{equation}

Each generator is risk neutral, and so makes first and second stages to maximize the expectation of (\ref{genprofit}). In stage 2, given production level $w$ and price $P_2(w)$, generator $i$ solves the following problem 
\begin{equation}\label{genstage2}
(\text{GEN2}_i)\quad \max_{z^G_i(w)\geq 0}\quad P_2(w)z^G_i(w) - \tilde{a}_i(z^G_i(w))^2.
\end{equation}
Let $\pi^2_i(w,P_2)$ be the maximum objective value obtained in solving (\ref{genstage2}), given $w$ and $P_2$. Then in the first stage, given price $P_1$, generator $i$ solves the following problem 
\begin{equation}
(\text{GEN1}_i)\quad \max_{x^G_i\geq 0}\quad P_1x^G_i - c_i(x^G_i) + \mathbb{E}[\pi^2_i(W,P_2)].
\end{equation}
The term $\mathbb{E}[\pi^2_i(W,P_2)]$ is a constant when optimizing over $x^G_i$, as generator $i$'s DA and RT decisions can be made independently. In order to emphasize the fact that generator $i$ observes $W=w$ prior to selecting $z^G_i(w)$, we separate generator $i$'s two optimization problems. 
\subsection{ISO's Problem}
In Section III, our definition of a sequential competitive equilibrium includes a tuple of allocations, i.e., generation levels. For the purposes of examining the welfare properties of these allocations, we now introduce a two stage social planner's problem (SPP), corresponding to our two settlement market. As is seen in the static case, the SPP involves maximizing the social welfare of all market participants. We take the welfare of generator $i$ to be the negation of generation costs from stages 1 and 2. Given $W=w$, the aggregate welfare is the negation of the summation of these costs over all generators:

\begin{equation}\label{SPPwelfare}c^{\text{SPP}}(w):=\sum_i\left(a_i\hat{x}^2_i + \tilde{a}_i\hat{z}^2_i(w)\right),\end{equation} 
where $(\hat{x}_i,\hat{z}_i(w))$ for all $i$ and $w$ are the social planner's decisions in stages 1 and 2. Define $\hat{x}:=(\hat{x}_1,\dots,\hat{x}_N)$, and similarly $\hat{z}(w):=(\hat{z}_1(w),\dots,\hat{z}_N(w))$. 

We assume that the social planner is risk averse. That is, instead of seeking to minimize the expectation of (\ref{SPPwelfare}), they seek to minimize a weighted combination of $\mathbb{E}[c^{\text{SPP}}(W)]$ and $\text{CVaR}_{\alpha}(c^\text{SPP}(W))$. $\alpha\in[0,1)$ signifies that the ISO considers worst case or tail events with cumulative probability $1-\alpha$ to be ``risky'', and therefore weights them more heavily. We now introduce the additional parameter $\epsilon\in[0,1]$, which gives the social planner's relative weighting of overall expectation and $\text{CVaR}_{\alpha}$ of the first and second stage generation costs, and define the social planner's risk measure as 
\begin{equation}\rho_{\text{SPP}}(\cdot) = (1-\epsilon)\mathbb{E}[\cdot] + \epsilon\text{CVaR}_{\alpha}(\cdot).\end{equation}
It can be shown that $\rho_{\text{SPP}}(\cdot)$ is a coherent risk measure \cite{ShapDentRusz09}. 

Given that $\hat{y}$ is the amount of renewable generation scheduled by the social planner in stage 1, and $W=w$, the social planner's second stage problem is

\begin{align}
\label{secondstagestart}\text{(SPP2)}\quad\min_{\hat{z}(w)\geq 0}&\quad \sum_i\tilde{a}_i\hat{z}^2_i(w)\\
\label{recourseconstrSPP2}\text{s.t.}&\quad \sum_i\hat{z}_i(w) \geq \hat{y}-w.
\end{align}
Note that constraint (\ref{recourseconstrSPP2}) is an inequality in order to accommodate scenarios in which renewable generation exceeds residual demand $D-\sum_i\hat{x}_i=\hat{y}$. 

Define $c^{\text{SPP}}_2(\hat{x},w)$ as the minimum aggregate social cost achieved in the second stage, given $\hat{x}$ and $W=w$. Then the social planner's first stage problem is
\begin{align}
\text{(SPP1)}\quad\min_{\hat{x},\hat{y}\geq 0}&\quad \sum_ia_i\hat{x}^2_i + \rho_{\text{SPP}}\left(c^{\text{SPP}}_2(\hat{x},W)\right)\\
\text{s.t.}&\quad \sum_i\hat{x}_i + \hat{y} = D,
\end{align}
where we have used translation equivariance of $\text{CVaR}_{\alpha}$ to move the summed first-stage costs outside of $\rho_{\text{SPP}}$. 
We now argue that problems (SPP1) and (SPP2) can be combined into the following single stage optimization problem. 
\begin{lemma}\label{lem2stage1stage} The two-stage problem (SPP1)-(SPP2) is equivalent to the following single stage problem:
\begin{align}
\label{SPPstart}\text{(SPP)}\min_{\hat{x},\hat{y},\hat{z}(\cdot)\geq0}&\quad \sum_ic_i(\hat{x}_i) + \rho_{\text{SPP}}\left(\sum_i\tilde{a}_i\hat{z}^2_i(W)\right)\\
\label{firststageconstr}\text{s.t.}&\quad \sum_i\hat{x}_i + \hat{y} = D\\
\label{recourseconstr}&\quad \sum_i\hat{z}_i(w) \geq \hat{y}-w\quad\forall\,w,
\end{align}
where $\hat{z}(\cdot)\,:\,\mathbb{R}_+\to\mathbb{R}_+$. 
\end{lemma}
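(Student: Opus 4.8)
The plan is to prove equivalence by (i) solving the second–stage problem (SPP2) in closed form, (ii) establishing the two inequalities between the optimal value of the two–stage problem (SPP1)–(SPP2) and that of (SPP), and (iii) transferring minimizers between the two formulations. The conceptual heart is an \emph{interchangeability} argument: since $\rho_{\text{SPP}}$ is monotonic, minimizing the recourse cost $\sum_i\tilde{a}_i\hat{z}_i^2(W)$ over the function $\hat{z}(\cdot)$ \emph{inside} the risk functional coincides with first minimizing pointwise in $w$ and then applying $\rho_{\text{SPP}}$.

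First I would analyze (SPP2) for a fixed feasible $\hat{x}$ (equivalently fixed $\hat{y}=D-\sum_i\hat{x}_i$) and fixed $w$. It is a strictly convex quadratic program with one linear inequality and nonnegativity constraints, hence has a unique minimizer. A routine KKT computation yields the closed form $\hat{z}_i^{\star}(\hat{x},w) = [\hat{y}-w]_+\big/\big(\tilde{a}_i\sum_j 1/\tilde{a}_j\big)$ and therefore $c^{\text{SPP}}_2(\hat{x},w) = [\hat{y}-w]_+^2\big/\sum_j 1/\tilde{a}_j$. In particular $w\mapsto\hat{z}^{\star}(\hat{x},w)$ is continuous, hence Borel measurable, and bounded on $[0,\overline{W}]$, so by Assumption \ref{Wassumption} (compact support) the random variable $\sum_i\tilde{a}_i(\hat{z}_i^{\star}(\hat{x},W))^2$ lies in the domain of $\rho_{\text{SPP}}$. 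One does not strictly need the explicit formula — existence, uniqueness and measurability of the parametric minimizer would suffice — but the closed form makes measurability immediate and is reused below.

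Next, for $\mathrm{val}(\text{SPP})\ge\mathrm{val}(\text{SPP1})$: take any feasible $(\hat{x},\hat{y},\hat{z}(\cdot))$ for (SPP). For every $w$, $\hat{z}(w)$ is feasible for (SPP2), so $\sum_i\tilde{a}_i\hat{z}_i^2(w)\ge c^{\text{SPP}}_2(\hat{x},w)$ pointwise; monotonicity of $\rho_{\text{SPP}}$ gives $\rho_{\text{SPP}}\big(\sum_i\tilde{a}_i\hat{z}_i^2(W)\big)\ge\rho_{\text{SPP}}\big(c^{\text{SPP}}_2(\hat{x},W)\big)$, and since the first–stage costs $\sum_i c_i(\hat{x}_i)=\sum_i a_i\hat{x}_i^2$ are identical in both formulations, the (SPP) objective at $(\hat{x},\hat{y},\hat{z}(\cdot))$ dominates the (SPP1) objective at $(\hat{x},\hat{y})$; taking infima yields this direction. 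For the reverse inequality, take any feasible $(\hat{x},\hat{y})$ for (SPP1)–(SPP2) and set $\hat{z}(\cdot):=\hat{z}^{\star}(\hat{x},\cdot)$; by the previous paragraph this is an admissible decision in (SPP), it satisfies the recourse constraint (\ref{recourseconstr}), and the (SPP) objective it produces equals exactly the (SPP1) objective at $(\hat{x},\hat{y})$. Hence $\mathrm{val}(\text{SPP})\le\mathrm{val}(\text{SPP1})$, so the two optimal values coincide. The same construction shows that a minimizer of (SPP1)–(SPP2) together with the pointwise recourse map $\hat{z}^{\star}$ is a minimizer of (SPP), and conversely that the $(\hat{x},\hat{y})$–marginal of any minimizer of (SPP) solves (SPP1)–(SPP2) with recourse equal a.e.\ to $\hat{z}^{\star}$. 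Attainment of the optima follows since the objectives are convex — using Theorem \ref{CVaRconvexinx} for the $\text{CVaR}_{\alpha}$ term, as $\hat{z}\mapsto\sum_i\tilde{a}_i\hat{z}_i^2$ is convex — and coercive over the feasible set.

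The step requiring the most care is the interchangeability/measurability point: justifying that the pointwise–optimal recourse policy is a legitimate (measurable, $\rho_{\text{SPP}}$–integrable) decision, so that nothing separates ``minimize then apply $\rho_{\text{SPP}}$'' from ``apply $\rho_{\text{SPP}}$ then minimize.'' Here the compact support of $W$ together with the explicit continuous form of $\hat{z}^{\star}$ settles this directly; without the closed form one would instead invoke a standard measurable–selection theorem for the parametric quadratic program.
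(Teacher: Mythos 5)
Your proof is correct, but it takes a genuinely different route from the paper. The paper's proof is a two-line appeal to general machinery: it checks that $\text{CVaR}_{\alpha}$ is a proper, monotonic, continuous risk measure and that $c^{\text{SPP}}_2(\hat{x},W)\in\mathcal{L}_1$, and then cites the interchangeability principle for monotone risk measures (Proposition 6.37 of \cite{ShapDentRusz09}, with Theorem 7.80 handling the expectation term) to move the minimization over $\hat{z}(\cdot)$ inside $\rho_{\text{SPP}}$. You instead re-derive that interchange from scratch in this specific instance: you solve the second-stage quadratic program in closed form, use monotonicity of $\rho_{\text{SPP}}$ for the inequality $\mathrm{val}(\text{SPP})\ge\mathrm{val}(\text{SPP1})$, and substitute the explicit, continuous, bounded pointwise minimizer $\hat{z}^{\star}(\hat{x},\cdot)$ to get the reverse inequality — which is exactly the content of the cited proposition, proved by hand. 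What the paper's route buys is brevity and generality (it would survive replacing the quadratic costs by arbitrary convex ones and $\rho_{\text{SPP}}$ by any proper, lower semicontinuous, monotone risk measure); what your route buys is self-containment and a transparent resolution of the measurable-selection issue, since the closed form makes measurability and integrability of the optimal recourse immediate. Two minor caveats: your closing remark that attainment follows from convexity and coercivity is glossed for the infinite-dimensional problem (SPP) (though attainment is not actually needed for the equivalence of optimal values), and your claim that any minimizer of (SPP) has recourse equal a.e.\ to $\hat{z}^{\star}$ silently uses $\epsilon<1$ — for $\epsilon=1$ the pure $\text{CVaR}_{\alpha}$ objective need not distinguish policies that differ off the upper tail. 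Neither affects the main equivalence.
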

\begin{proof}
See Appendix. 
\end{proof}

Here we use the term ``equivalent'' in the sense that (SPP) and (SPP1) have the same optimal objective value. Additionally, if $(\hat{x}^*,\hat{z}^*(\cdot))$ is optimal for (SPP), then $\hat{x}^*$ is optimal for (SPP1), and $\hat{z}^*(w)$ is optimal for (SPP2) for all $w$, given $\hat{x}^*$. Conversely, if $\hat{x}^*$ is optimal for (SPP1) and $\hat{z}^*(\cdot)$ collects the optimal solutions to (SPP2) for all $w$, given $\hat{x}^*$, then $(\hat{x}^*,\hat{z}^*(\cdot))$ is optimal for (SPP). 

Similar to the equivalency demonstrated for the ISO's problem in Lemma \ref{lem2stage1stage}, it can be shown that the following single stage problem is equivalent to ($\text{GEN1}_i$) and ($\text{GEN2}_i$) 
\begin{equation}\label{gensinglestage}
\begin{split}
(\text{GEN}_i)\quad \max_{x^G_i\geq0,z^G_i(\cdot)\geq0}&\quad P_1x^G_i - a_i(x^G_i)^2\\
 &+ \mathbb{E}[P_2(w)z^G_i(w) - \tilde{a}_i(z^G_i(w))^2].
\end{split}
\end{equation}
where $z_i^G(\cdot)\,:\,\mathbb{R}_+\to\mathbb{R}_+$. 

 \section{Sequential Competitive Equilibrium}
 
 In a single stage market for a single good, a \textit{competitive equilibrium} is specified by a price $P$ and quantity $x$ such that, given $P$, producers find it optimal to produce, and consumers find it optimal to purchase, quantity $x$ of the good. Thus, the market clears, i.e., demand equals supply.

To understand the outcome of the two-stage market, we consider a sequential version of competitive equilibrium.

  \begin{definition}A \textit{sequential competitive equilibrium} (SCEq) is a tuple  $(\overline{x}^*,\overline{z}^*(\cdot),P^*_1,P^*_2(\cdot))$ such that, for all $i$, given $P^*_1$ and $P^*_2(\cdot)$, $\overline{x}^*_i$ is optimal for $(\text{GEN1}_i)$, $\overline{z}^*_i$ is optimal for $(\text{GEN2}_i)$, and there exists a $\overline{y}^*$, such that 
  \begin{equation}\sum_i\overline{x}^*_i + \overline{y}^* = D,\quad \sum_i\overline{z}^*_i(w) \geq \overline{y}^* - w\quad\forall\,w.\end{equation}
  \end{definition}  
  
  Note that in the SCEq definition, $\overline{z}_i^*(\cdot)$ and $P^*_2(\cdot)$ are functions. 
  We now investigate the existence of an SCEq in our two stage, risk aware setting. 

Let $\hat{\mu}(w)$ be the Lagrange multiplier corresponding to constraint (\ref{recourseconstrSPP2}). The Lagrangian for (SPP2) is

\begin{equation}\begin{split}\mathcal{L} &= \sum_i\tilde{a}_i\hat{z}^2_i(w)+\hat{\mu}(w)\left(\hat{y} - w-\sum_i\hat{z}_i(w)\right),\end{split}\end{equation}
giving, in addition to feasibility, the following optimality conditions for problem (SPP2):
\begin{align}
2\tilde{a}_i\hat{z}^*_i(w) - \hat{\mu}^*(w)&\geq 0\quad\forall\.i\\
\hat{z}_i^*\left(2\tilde{a}_i\hat{z}^*_i(w) - \hat{\mu}^*(w)\right)&= 0\quad\forall\.i\\
\hat{\mu}^*(w)\left(\hat{y}^* - w - \sum_i\hat{z}^*_i(w)\right)&=0\quad\forall\,w,\\
\hat{\mu}^*(w)&\geq 0\quad\forall\,w.
\end{align}
Assuming $\hat{y}>w$, $\hat{z}^*_i(w)>0$ for all $i$, and in particular
\begin{equation}\label{zstar} \hat{z}^*_i(w) = \frac{\hat{\mu}^*(w)}{2\tilde{a}_i}.\end{equation}
If $\hat{y}\leq w$ then $\hat{z}^*_i(w)=0$ for all $i$. Summing (\ref{zstar}) over $i$, applying constraint (\ref{recourseconstr}), and rearranging gives 
\begin{equation}
\hat{\mu}^*(w) =  2\tilde{a}\cdot\left[\hat{y}-w\right]_+,
\end{equation}
where the constant $\tilde{a}$ is defined as $\tilde{a}:=\left(\sum_{i}\frac{1}{\tilde{a}_i}\right)^{-1}$. Therefore 
\begin{equation}\nonumber \hat{z}^*_i(w) = \tilde{a}\cdot\frac{[\hat{y}-w]_+}{\tilde{a}_i}.\end{equation}
Summing over $i$ gives the optimal second stage objective value (i.e., the minimum recourse cost given $\hat{x}$)
\begin{equation}\label{mincost}c^{\text{SPP}}_2(\hat{x},w)= \sum_{i\in \mathcal{I}}\tilde{a}_i\left(\tilde{a}\cdot\frac{[\hat{y}-w]_+}{\tilde{a}_i}\right)^2 =\tilde{a}\cdot\left[\hat{y}-w\right]_+^2.\end{equation}

Therefore, $\text{VaR}_{\alpha}(c^{\text{SPP}}_2(\hat{x},W))$ may be expressed as
\begin{align}
\nonumber&\text{VaR}_{\alpha}\left(\tilde{a}\cdot[\hat{y}-W]_+^2\right) \\
&= \inf\left\{t\,:\,P\left(\tilde{a}\cdot\left[\hat{y}-W\right]_+^2\leq t\right)\geq \alpha\right\}\\
\nonumber&= \inf\left\{t\geq 0\,:\,P\left(W<\hat{y}-\sqrt{\frac{t}{\tilde{a}}}\right)\leq 1-\alpha\right\}\\
&= \begin{cases}
0&\text{if } \hat{y}<F^{-1}_W(1-\alpha)\\
\tilde{a}\cdot(\hat{y}-F^{-1}_W(1-\alpha))^2&\text{if }\hat{y}\geq F^{-1}_W(1-\alpha).
\end{cases}
\end{align}
Given this expression for $\text{VaR}_{\alpha}$, the following lemma gives an explicit expression of $\text{CVaR}_{\alpha}$ for our quadratic cost function setting. 
\begin{lemma}
Assuming first and second stage generation cost functions of the form $ax^2$ and $\tilde{a}z(w)^2$, $a,\tilde{a}>0$, $\text{CVaR}_{\alpha}(c^{\text{SPP}}_2(\hat{x},W))$ can be expressed as 
\begin{equation}\begin{split}
\label{CVaRintegral}&\text{CVaR}_{\alpha}\left(c^{\text{SPP}}_2(\hat{x},W)\right)=\text{CVaR}_{\alpha}\left(\tilde{a}\cdot[\hat{y}-W]_+^2\right)\\
&= \frac{1}{1-\alpha}\int_{0}^{\min\{F^{-1}_W(1-\alpha),\hat{y}\}}\tilde{a}\cdot(\hat{y}-w)^2\,f_W(w)\,dw.
\end{split}
\end{equation}
\end{lemma}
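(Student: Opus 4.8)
The plan is to evaluate $\text{CVaR}_{\alpha}$ directly from Definition \ref{CVaRdef} by substituting the minimizer $\zeta = \text{VaR}_{\alpha}(Z)$, reusing the closed form for $\text{VaR}_{\alpha}(c^{\text{SPP}}_2(\hat{x},W))$ derived just above the lemma statement. Write $Z := c^{\text{SPP}}_2(\hat{x},W) = \tilde{a}\cdot[\hat{y}-W]_+^2$ and $v := \text{VaR}_{\alpha}(Z)$. By Definition \ref{CVaRdef}, $v$ belongs to $\arg\min_\zeta \phi_\alpha(Z,\zeta)$, so $\text{CVaR}_{\alpha}(Z) = \phi_\alpha(Z,v) = v + \frac{1}{1-\alpha}\mathbb{E}[Z-v]_+$, and it remains only to compute $\mathbb{E}[Z-v]_+$ in the two regimes identified for $v$.

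First I would treat the case $\hat{y} < F^{-1}_W(1-\alpha)$, where $v = 0$. Then $\mathbb{E}[Z-v]_+ = \mathbb{E}[Z] = \int_0^{\hat{y}}\tilde{a}(\hat{y}-w)^2 f_W(w)\,dw$, using that $[\hat{y}-w]_+ = 0$ for $w\geq \hat{y}$. Since $\min\{F^{-1}_W(1-\alpha),\hat{y}\} = \hat{y}$ in this regime, this is exactly the claimed expression (\ref{CVaRintegral}). Next I would handle $\hat{y}\geq F^{-1}_W(1-\alpha)$, where $v = \tilde{a}(\hat{y}-F^{-1}_W(1-\alpha))^2$. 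The key observation is that, because $t\mapsto \tilde{a}[\hat{y}-t]_+^2$ is nonincreasing, the event $\{Z > v\}$ coincides up to a null set with $\{W < F^{-1}_W(1-\alpha)\}$, on which $Z = \tilde{a}(\hat{y}-W)^2$; by continuity of $F_W$ (Assumption \ref{Wassumption}) this event has probability exactly $1-\alpha$. Hence $\mathbb{E}[Z-v]_+ = \int_0^{F^{-1}_W(1-\alpha)}\left(\tilde{a}(\hat{y}-w)^2 - v\right)f_W(w)\,dw = \int_0^{F^{-1}_W(1-\alpha)}\tilde{a}(\hat{y}-w)^2 f_W(w)\,dw - v(1-\alpha)$, and substituting into $\text{CVaR}_{\alpha}(Z) = v + \frac{1}{1-\alpha}\mathbb{E}[Z-v]_+$ cancels the stray $v$ term, leaving $\frac{1}{1-\alpha}\int_0^{F^{-1}_W(1-\alpha)}\tilde{a}(\hat{y}-w)^2 f_W(w)\,dw$. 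Since $\min\{F^{-1}_W(1-\alpha),\hat{y}\} = F^{-1}_W(1-\alpha)$ here, this again matches (\ref{CVaRintegral}); combining the two cases finishes the proof.

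I expect the main obstacle to be the careful bookkeeping around the atom of $Z$ at $0$ (present whenever $P(W\geq \hat{y})>0$) together with the non-strict monotonicity of $t\mapsto [\hat{y}-t]_+^2$: one must verify that $\{Z > v\}$ and $\{W < F^{-1}_W(1-\alpha)\}$ differ only by a set of measure zero, which is precisely where Assumption \ref{Wassumption} — continuity and positivity of $f_W$ on $[0,\overline{W}]$, hence continuity and strict monotonicity of $F_W$ there — is used. The remaining work is routine integration and the two-case algebra.
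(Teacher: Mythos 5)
Your proof is correct, and it diverges from the paper's in a way worth noting. The paper splits on whether $\text{VaR}_{\alpha}(Z)$ is zero exactly as you do, and its zero-VaR case is identical to yours; but for the case $\text{VaR}_{\alpha}(Z)>0$ it invokes the tail-conditional-expectation representation (Theorem 6.2 of the Shapiro--Dentcheva--Ruszczy\'nski reference), writing $\text{CVaR}_{\alpha}$ as $\frac{1}{1-\alpha}\int q\,f_{c^{\text{SPP}}_2}(q)\,dq$ over the loss range above VaR and then changing variables from the loss density back to $f_W$. You instead evaluate the Rockafellar--Uryasev functional $\phi_{\alpha}(Z,\zeta)$ at $\zeta=v=\text{VaR}_{\alpha}(Z)$ in \emph{both} regimes and let the identity $P(Z>v)=P(W<F_W^{-1}(1-\alpha))=1-\alpha$ cancel the stray $v$ term. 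Your route is more self-contained: it needs only Definition \ref{CVaRdef} (plus the fact that the lower endpoint of the closed argmin interval attains the minimum) and the continuity of $F_W$ at the quantile, and it sidesteps the implicit requirement in the paper's computation that the loss $Z$ admit a density above $\text{VaR}_{\alpha}(Z)$ together with the unstated change-of-variables step. The paper's route buys brevity by citing a standard theorem. You also correctly isolate the one delicate point --- the atom of $Z$ at $0$ and the identification of $\{Z>v\}$ with $\{W<F_W^{-1}(1-\alpha)\}$ --- and correctly attribute its resolution to Assumption \ref{Wassumption}; in fact under that assumption the two events coincide exactly, not merely up to a null set.
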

\begin{proof}
Given Assumption \ref{Wassumption}, the cdf $F_{c^{\text{SPP}}_2}$ of losses $c^{\text{SPP}}_2(\hat{y},W)$ will be continuous everywhere except possibly at zero, since $P(c^{\text{SPP}}_2(\hat{x},W)=0) = P(W \geq \hat{y})$. By Theorem 6.2 of \cite{ShapDentRusz09}, when $\text{VaR}_{\alpha}(c^{\text{SPP}}_2(\hat{x},W))>0$, we may write 
\begin{align}
\nonumber\text{CVaR}_{\alpha}&(c^{\text{SPP}}_2(\hat{x},W)) =\frac{1}{1-\alpha}\int_{\tilde{a}\cdot(\hat{y}-F^{-1}_W(1-\alpha))^2}^{\tilde{a}\hat{y}^2}qf_{c^{\text{SPP}}_2}(q)\,dq\\
\label{case1}&=\frac{1}{1-\alpha}\int_{0}^{F^{-1}_W(1-\alpha)}(\hat{y}-w)^2f_W(w)\,dw,
\end{align}
where $f_{c^{\text{SPP}}_2}$ gives the pdf corresponding to $F_{c^{\text{SPP}}_2}$. 
If $\text{VaR}_{\alpha}(c^{\text{SPP}}_2(\hat{x},W))=0$, then using Definition \ref{CVaRdef}, we have that 
\begin{align}\nonumber\text{CVaR}_{\alpha}&(c^{\text{SPP}}_2(\hat{x},W)) =  \zeta^* + \frac{1}{1-\alpha}\mathbb{E}[c^{\text{SPP}}_2(\hat{x},W)-\zeta^*]_+\\
\nonumber&=  0 + \frac{1}{1-\alpha}\mathbb{E}[c^{\text{SPP}}_2(\hat{x},W)- 0]_+\\
\label{case2}&=  \frac{1}{1-\alpha}\int_{0}^{\hat{y}} c^{\text{SPP}}_2(\hat{x},w)f_W(w)\,dw.
\end{align}
Substituting for $c^{\text{SPP}}_2(\hat{x},w)$ and then combining (\ref{case1}) and (\ref{case2}) completes the proof. 
\end{proof}

While $\text{CVaR}_{\alpha}(c^{\text{SPP}}_2(\hat{x},W))$ is convex in the first stage decision $\hat{x}$ due to (\ref{mincost}) and Theorem \ref{CVaRconvexinx}, the upper limit $\hat{\theta}$ of the integral in (\ref{CVaRintegral}) is not a differentiable function of $\hat{y}$, so that the Leibniz integral rule does not directly apply. The next lemma addresses this issue. 

\begin{lemma}\label{contdiff}
Given Assumption \ref{Wassumption}, expression (\ref{CVaRintegral}) is continuously differentiable with respect to $\hat{y}$, with derivative 
\begin{equation}\begin{split}\text{CVaR}'_{\alpha}(c^{\text{SPP}}_2(\hat{x},W)) &= 2\tilde{a}\int_{0}^{\hat{\theta}}(\hat{y}-w)f_W(w)\,dw.
\end{split}\end{equation}
\end{lemma}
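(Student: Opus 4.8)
The plan is to view the right-hand side of (\ref{CVaRintegral}) as a function $\phi$ of the scalar variable $\hat y\ge 0$, and to handle the kink of the integration limit $\hat\theta=\min\{F_W^{-1}(1-\alpha),\hat y\}$ by splitting at $c:=F_W^{-1}(1-\alpha)$. I would introduce
\[
\phi_1(\hat y):=\frac{\tilde a}{1-\alpha}\int_0^{\hat y}(\hat y-w)^2 f_W(w)\,dw,\qquad
\phi_2(\hat y):=\frac{\tilde a}{1-\alpha}\int_0^{c}(\hat y-w)^2 f_W(w)\,dw,
\]
so that $\phi=\phi_1$ on $[0,c]$ and $\phi=\phi_2$ on $[c,\infty)$. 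The strategy is to show that $\phi_1$ and $\phi_2$ are each $C^1$ on all of $\mathbb{R}_+$ and that they agree to first order at $\hat y=c$, so that gluing them yields a $C^1$ function.

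For $\phi_2$ the integration domain $[0,c]$ is fixed; the integrand $(\hat y-w)^2 f_W(w)$ and its partial $2(\hat y-w)f_W(w)$ are jointly continuous in $(\hat y,w)$, and $f_W$ is bounded on the compact interval $[0,\overline W]$ by Assumption \ref{Wassumption}, so differentiation under the integral sign applies and gives $\phi_2'(\hat y)=\frac{2\tilde a}{1-\alpha}\int_0^{c}(\hat y-w)f_W(w)\,dw$, which is continuous in $\hat y$. For $\phi_1$ the upper limit moves with $\hat y$; the Leibniz integral rule applies because $f_W$ is continuous, and the decisive observation is that the moving-boundary term $\big[(\hat y-w)^2 f_W(w)\big]_{w=\hat y}$ equals $0$, so $\phi_1'(\hat y)=\frac{2\tilde a}{1-\alpha}\int_0^{\hat y}(\hat y-w)f_W(w)\,dw$, again continuous.

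Next I would glue at $\hat y=c$. Direct evaluation gives $\phi_1(c)=\phi_2(c)=\frac{\tilde a}{1-\alpha}\int_0^{c}(c-w)^2 f_W(w)\,dw$ and $\phi_1'(c)=\phi_2'(c)=\frac{2\tilde a}{1-\alpha}\int_0^{c}(c-w)f_W(w)\,dw$. A function that equals $\phi_1$ on $[0,c]$ and $\phi_2$ on $[c,\infty)$, with $\phi_1,\phi_2$ differentiable at the junction and having matching value and matching derivative there, is itself differentiable at the junction (the left and right difference quotients both converge to the common value $\phi_1'(c)$), and its derivative is continuous there because $\phi_1'$ and $\phi_2'$ are each continuous and coincide at $c$. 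Combined with the previous paragraph, this shows that $\phi$ is continuously differentiable on $\mathbb{R}_+$ with $\phi'(\hat y)=\frac{2\tilde a}{1-\alpha}\int_0^{\hat\theta}(\hat y-w)f_W(w)\,dw$, which is the claimed derivative.

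The only nontrivial point --- the one flagged in the discussion preceding the lemma --- is the non-differentiability of $\hat\theta$ at $\hat y=c$, which forbids applying the Leibniz rule to (\ref{CVaRintegral}) directly. What rescues the argument is that the integrand $(\hat y-w)^2 f_W(w)$ vanishes at $w=\hat y$: this annihilates the moving-boundary term in the Leibniz rule for $\phi_1$, so no kink is introduced in the derivative and $\phi$ turns out to be $C^1$ rather than merely Lipschitz. Everything else --- joint continuity of the integrand and its $\hat y$-partial, boundedness of $f_W$ from Assumption \ref{Wassumption}, and continuity of an integral in a parameter and in its upper limit --- is routine.
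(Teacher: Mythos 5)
Your proof is correct and follows essentially the same route as the paper's: split at $c=F_W^{-1}(1-\alpha)$, apply the Leibniz rule on each piece (the moving-boundary term vanishing because the integrand is zero at $w=\hat y$), and verify that the two derivative expressions agree at the junction --- you are merely more explicit about the vanishing boundary term and the $C^1$ gluing argument. One remark: your computed derivative correctly retains the $\frac{1}{1-\alpha}$ prefactor present in (\ref{CVaRintegral}), which the displayed formula in the lemma statement appears to drop.
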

\begin{proof}
We consider two cases, depending on the two possible values of $\hat{\theta}(\hat{y})$. When $\hat{\theta}(\hat{y}) = F^{-1}_W(1-\alpha)$, applying the Leibniz integral rule gives 
$$ \text{CVaR}'_{\alpha}(c^{\text{SPP}}_2(\hat{x},W)) = 2\tilde{a}\int_{0}^{F^{-1}_W(1-\alpha)}(\hat{y}-w)f_W(w)\,dw.$$
When $\hat{\theta}(\hat{y}) = \hat{y}$, application of the Leibniz integral rule gives 
$$ \text{CVaR}'_{\alpha}(c^{\text{SPP}}_2(\hat{x},W)) = 2\tilde{a}\int_{0}^{\hat{y}}(\hat{y}-w)f_W(w)\,dw.$$
Combining the last two equations gives the expression in the lemma statement. When $\hat{y}\leq F^{-1}_W(1-\alpha)$, $\text{CVaR}'_{\alpha}(c^{\text{SPP}}_2(\hat{x},W))$ is an affine function of $\hat{y}$, and when $\hat{y}> F^{-1}_W(1-\alpha)$, $\text{CVaR}'_{\alpha}(c^{\text{SPP}}_2(\hat{x},W))$ is continuous by the continuity of $f_W(w)$. The two expressions agree at $\hat{y}=F^{-1}_W(1-\alpha)$, so that $\text{CVaR}'_{\alpha}(c^{\text{SPP}}_2(\hat{x},W))$ is continuous. 
\end{proof}

Let $\hat{\theta} = \hat{\theta}(\hat{y}) = \min\{F^{-1}_W(1-\alpha),\hat{y}\}$. Then, problem (SPP) may be written as
\begin{align}\label{central}\min_{\hat{x},\hat{y},\hat{z}(\cdot)\geq 0}&\quad \sum_{i}a_i\hat{x}^2_i +(1-\epsilon)\int_{0}^{\hat{y}}\sum_i\tilde{a}_i\hat{z}^2_i(w)\,f_W(w)\,dw \\
\nonumber&\hspace{0.5in} + \frac{\epsilon}{1-\alpha}\int_{0}^{\hat{\theta}}\sum_i\tilde{a}_i\hat{z}^2_i(w)\,f_W(w)\,dw\\
\label{firststageconstr_rewrite}\text{s.t.}&\quad \sum_i\hat{x}_i + \hat{y} = D\\
\label{recourseconstr_rewrite}&\quad \sum_i\hat{z}_i(w) \geq \hat{y} - w\quad\forall\,w.
\end{align}

  Locational marginal pricing (LMP) is a commonly used settlement scheme for economic dispatch problems, and previous work has examined extensions of LMPs to problems including two stage markets with recourse. In such models, the LMPs arise as the dual variables to power balance constraints for each stage (in our setting (\ref{firststageconstr_rewrite}) and (\ref{recourseconstr_rewrite}) in (SPP)). Previous work (\cite{dahlin2019twostage},\cite{wannegkowshameysha11b}) has demonstrated that such LMPs support a competitive equilibrium when the ISO or social planner is risk neutral, i.e. when $\epsilon=0$. We state this formally in terms of our setting in the following theorem. 
  
  Let $\hat{\lambda}^*$ and $\hat{\mu}^*(w)$ denote the optimal Lagrange multipliers for constraints (\ref{firststageconstr_rewrite}) and (\ref{recourseconstr_rewrite}), given $W=w$, respectively. 
  
     \begin{theorem} When $\epsilon=0$, there exists an SCEq. In particular, $(\overline{x}^*,\overline{z}^*)$ are given by $(\hat{x}^*,\hat{z}^*)$ in the optimal solution to (SPP), and the equilibrium prices are given by 
\begin{equation}
              P^*_2(w) =  \hat{\mu}^*(w),\quad P^*_1  = \hat{\lambda}^*.
\end{equation}
  \end{theorem}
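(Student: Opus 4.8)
The plan is to run the standard Lagrangian decentralization argument: when $\epsilon=0$, problem (SPP) is convex, strong duality holds, and its Lagrangian separates across the two stages and across the generators, so that pricing the power-balance constraints (\ref{firststageconstr}) and (\ref{recourseconstr}) at their optimal multipliers $\hat{\lambda}^{*},\hat{\mu}^{*}(\cdot)$ turns each generator's individual problem into exactly one block of the planner's optimality conditions.

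First I would note that with $\epsilon=0$ the objective of (SPP) is convex (strictly in $\hat{x}$, since $a_i>0$), the constraints are affine, and a Slater point exists (take each $\hat{x}_i>0$ small, $\hat{y}=D-\sum_i\hat{x}_i>0$, and $\hat{z}_i(w)$ strictly above $(\hat{y}-w)/N$ for every $w$). Moreover, because the recourse problem (SPP2) was already solved in closed form in (\ref{zstar})--(\ref{mincost}), (SPP) collapses to the genuinely finite-dimensional convex program (\ref{central}) in $(\hat{x},\hat{y})$, which has a minimizer $(\hat{x}^{*},\hat{y}^{*})$; from it one recovers $\hat{z}_i^{*}(w)=\tilde{a}\,[\hat{y}^{*}-w]_{+}/\tilde{a}_i$ and $\hat{\mu}^{*}(w)=2\tilde{a}\,[\hat{y}^{*}-w]_{+}$, while (\ref{firststageconstr_rewrite}) carries a multiplier $\hat{\lambda}^{*}$. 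Hence the KKT conditions hold and $(\hat{x}^{*},\hat{y}^{*},\hat{z}^{*}(\cdot))$ minimizes the Lagrangian of (SPP) at $(\hat{\lambda}^{*},\hat{\mu}^{*}(\cdot))$.

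Next I would write out that Lagrangian --- using $\tilde{a}_i\hat{z}_i^{2}(w)=0$ when $\hat{z}_i(w)=0$ to extend the second-stage integral to all of $[0,\overline{W}]$ ---
\begin{align*}
\mathcal{L}&=\sum_i\bigl(a_i\hat{x}_i^{2}-\hat{\lambda}^{*}\hat{x}_i\bigr)+\hat{y}\bigl(\mathbb{E}[\hat{\mu}^{*}(W)]-\hat{\lambda}^{*}\bigr)\\
&\quad+\int_{0}^{\overline{W}}\sum_i\bigl(\tilde{a}_i\hat{z}_i^{2}(w)-\hat{\mu}^{*}(w)\hat{z}_i(w)\bigr)f_W(w)\,dw+\mathrm{const},
\end{align*}
and observe that minimization over $\hat{x}\geq0$ decouples into the per-generator problems $\max_{\hat{x}_i\geq0}\,\hat{\lambda}^{*}\hat{x}_i-a_i\hat{x}_i^{2}$, which is precisely $(\text{GEN1}_i)$ with $P_1=\hat{\lambda}^{*}$ (the additive constant $\mathbb{E}[\pi^{2}_i(W,P_2)]$ being irrelevant), while minimization over $\hat{z}(\cdot)\geq0$ may be carried out pointwise in $w$ and decouples into $\max_{\hat{z}_i(w)\geq0}\,\hat{\mu}^{*}(w)\hat{z}_i(w)-\tilde{a}_i\hat{z}_i^{2}(w)$, which is precisely $(\text{GEN2}_i)$ with $P_2(w)=\hat{\mu}^{*}(w)$. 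Since $(\hat{x}^{*},\hat{z}^{*}(\cdot))$ minimizes $\mathcal{L}$, each of its blocks minimizes the corresponding subproblem, so $\overline{x}_i^{*}:=\hat{x}_i^{*}$ solves $(\text{GEN1}_i)$ and $\overline{z}_i^{*}(\cdot):=\hat{z}_i^{*}(\cdot)$ solves $(\text{GEN2}_i)$ at the stated prices; concretely one checks the first-order conditions directly, $\hat{z}_i^{*}(w)=\hat{\mu}^{*}(w)/(2\tilde{a}_i)$ and, by complementary slackness for (\ref{firststageconstr_rewrite}), $\hat{x}_i^{*}=[\hat{\lambda}^{*}]_{+}/(2a_i)$. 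Finally, taking $\overline{y}^{*}=\hat{y}^{*}$, primal feasibility of the (SPP) optimum is exactly $\sum_i\overline{x}_i^{*}+\overline{y}^{*}=D$ and $\sum_i\overline{z}_i^{*}(w)\geq\overline{y}^{*}-w$ for all $w$, so $(\overline{x}^{*},\overline{z}^{*}(\cdot),\hat{\lambda}^{*},\hat{\mu}^{*}(\cdot))$ is an SCEq.

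The main obstacle is the infinite-dimensional character of (SPP): $\hat{z}(\cdot)$ is a function, (\ref{recourseconstr}) is a continuum of constraints, and in principle one must justify both strong duality and the interchange of pointwise minimization with integration in the second-stage term. This is exactly why I would route the argument through the explicit solution of (SPP2) obtained earlier: once $\hat{z}^{*}(\cdot)$ and $\hat{\mu}^{*}(\cdot)$ are available in closed form, the remaining optimization is the finite-dimensional program (\ref{central}), the Lagrangian decomposition above is a routine separation of a sum plus an integral of separable terms, and the interchange issue evaporates.
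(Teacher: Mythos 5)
Your proposal is correct, but it is worth noting how it relates to what the paper actually does. For this particular theorem the paper gives no argument at all: it simply observes that the $\epsilon=0$ setting is a special case of \cite{dahlin2019twostage} and invokes Theorem 1 there. Your self-contained Lagrangian decomposition is, in substance, exactly the argument the paper deploys one theorem later for the general case $0\leq\epsilon<1$ (Theorem \ref{compequilexist}): there the authors write the KKT conditions of (SPP) and of $(\text{GEN}_i)$ side by side and observe that, at prices $P_1=\hat{\lambda}^*$, $P_2(w)=\hat{\mu}^*(w)/\hat{c}_{\epsilon,\alpha}(w)$, the two systems coincide; setting $\epsilon=0$ makes $\hat{c}_{\epsilon,\alpha}(w)=1$ on $[0,\hat{y}^*)$ and recovers your prices $P_2^*(w)=\hat{\mu}^*(w)$ (the two conventions agree for $w\geq\hat{y}^*$ since $\hat{\mu}^*(w)=0$ there by complementary slackness). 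Your route via explicit separation of the Lagrangian into per-generator blocks is slightly more transparent than condition-matching, and your handling of the infinite-dimensional aspect --- routing through the closed-form solution of (SPP2) so that the only genuinely infinite-dimensional step is a pointwise-in-$w$ minimization of a separable integrand --- addresses a point the paper glosses over. Two small blemishes: the formula $\hat{x}_i^*=[\hat{\lambda}^*]_+/(2a_i)$ follows from the first-order/complementarity conditions for the sign constraint $\hat{x}_i\geq 0$ (the paper's (\ref{SPPKKT1})--(\ref{SPPKKT2})), not from ``complementary slackness for (\ref{firststageconstr_rewrite})'', which is an equality constraint; and it would be worth one sentence noting that each generator's problem is strictly concave, so its maximizer is unique and the generators necessarily select $x_i^{G*}=\hat{x}_i^*$, $z_i^{G*}(w)=\hat{z}_i^*(w)$ rather than some other optimum --- this is what makes market clearing automatic rather than merely possible.
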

    \begin{proof} Our setting with $\epsilon=0$ can be seen as a special case of that in \cite{dahlin2019twostage}. The proof then follows from Theorem 1 in \cite{dahlin2019twostage}.
  \end{proof}
   \begin{theorem}\label{compequilexist} If $0\leq \epsilon<1$, then there exists a competitive equilibrium. In particular, $(\overline{x}^*,\overline{z}^*)$ are given by $(\hat{x}^*,\hat{z}^*)$, the optimal solution to problem (SPP), 
   and the equilibrium prices are given by 
  \begin{equation}\label{equilprices}
              P^*_2(w) =  \begin{cases}
\frac{\hat{\mu}^*(w)}{\left(1-\epsilon+ \frac{\epsilon}{1-\alpha}\right)}& 0\leq w\leq \hat{\theta}^*\\
\frac{\hat{\mu}^*(w)}{(1-\epsilon)}& \hat{\theta}^*\leq w< \hat{y}^*\\
0& \hat{y}^*\leq w\end{cases},\quad P^*_1  = \hat{\lambda}^*,
  \end{equation}
  where 
  $\hat{\theta}^*= \min\{F^{-1}_W(1-\alpha),\hat{y}^*\}$.
  \end{theorem}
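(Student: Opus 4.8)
The plan is to verify directly that the triple $(\hat{x}^*, \hat{z}^*(\cdot), P_1^*, P_2^*(\cdot))$ constructed from the (SPP) solution satisfies the SCEq definition. Market clearing is immediate: feasibility of $(\hat{x}^*, \hat{y}^*, \hat{z}^*(\cdot))$ for (SPP) gives exactly the two conditions $\sum_i \hat{x}_i^* + \hat{y}^* = D$ and $\sum_i \hat{z}_i^*(w) \geq \hat{y}^* - w$ for all $w$, with $\overline{y}^* = \hat{y}^*$. So the real content is showing that, at these prices, $\hat{x}^*$ solves $(\text{GEN1}_i)$ and $\hat{z}_i^*(\cdot)$ solves $(\text{GEN2}_i)$ for each $i$ --- equivalently, by the single-stage reformulation, that $(\hat{x}_i^*, \hat{z}_i^*(\cdot))$ solves $(\text{GEN}_i)$ of \eqref{gensinglestage}.

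Since $(\text{GEN}_i)$ is an unconstrained-above concave maximization (strictly concave in $x_i^G$ and in each $z_i^G(w)$), its solution is characterized by the first-order/KKT conditions with the nonnegativity multipliers. The stage-2 condition reads: $z_i^G(w) > 0 \Rightarrow P_2^*(w) = 2\tilde{a}_i z_i^G(w)$, and $z_i^G(w) = 0 \Rightarrow P_2^*(w) \leq 0$. The stage-1 condition reads: $x_i^G > 0 \Rightarrow P_1^* = 2 a_i x_i^G$, and $x_i^G = 0 \Rightarrow P_1^* \leq 0$. So the proof reduces to checking these equalities hold when we plug in $P_1^* = \hat{\lambda}^*$, $P_2^*(w)$ from \eqref{equilprices}, and the closed-form $\hat{z}_i^*(w) = \tilde{a}\,[\hat{y}^*-w]_+/\tilde{a}_i$. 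The key step is therefore to write down the KKT conditions of problem \eqref{central} in $\hat{x}$ and $\hat{y}$ (using Lemma \ref{contdiff} to differentiate the CVaR integral), read off what $\hat{\mu}^*(w)$ and $\hat{\lambda}^*$ must satisfy, and then show that dividing $\hat{\mu}^*(w)$ by the appropriate weight ($1-\epsilon+\frac{\epsilon}{1-\alpha}$ on $[0,\hat{\theta}^*]$, $1-\epsilon$ on $(\hat{\theta}^*,\hat{y}^*)$, and $0$ beyond $\hat{y}^*$) exactly reconciles the social planner's weighted objective with the generators' unweighted profit objective.

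Concretely, I would first recompute, from \eqref{central}, the pointwise optimality condition in $\hat{z}_i(w)$: for $w < \hat{y}^*$ this gives $2\tilde{a}_i \hat{z}_i^*(w) \cdot (\text{weight}(w)) = \hat{\mu}^*(w)$, where $\text{weight}(w) = 1-\epsilon+\frac{\epsilon}{1-\alpha}$ on $[0,\hat{\theta}^*]$ and $1-\epsilon$ on $(\hat{\theta}^*,\hat{y}^*)$; hence $\hat{\mu}^*(w)/\text{weight}(w) = 2\tilde{a}_i \hat{z}_i^*(w) = P_2^*(w)$, which is precisely the generator's stage-2 stationarity condition. For $w \geq \hat{y}^*$ we have $\hat{z}_i^*(w)=0$ and $P_2^*(w)=0$, consistent with the generator's condition $P_2^*(w)\leq 0$. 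Then for stage 1: the (SPP) stationarity in $\hat{x}_i$ (using $\partial \hat{y}/\partial \hat{x}_i = -1$ from \eqref{firststageconstr_rewrite} and Lemma \ref{contdiff} for the CVaR term's $\hat{y}$-derivative) yields $2a_i\hat{x}_i^* = \hat{\lambda}^*$, which is exactly the generator's stage-1 condition with $P_1^* = \hat{\lambda}^*$. I would also record that $\hat{\lambda}^* \geq 0$ and $\hat{\mu}^*(w)\geq 0$ so the prices are nonnegative and the $x_i^G=0$, $z_i^G(w)=0$ boundary cases are handled.

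The main obstacle is the non-smoothness bookkeeping at $w = \hat{\theta}^*$ and at $w = \hat{y}^*$: the objective \eqref{central} has an integrand weight that jumps at $\hat{\theta}^*$, and $\hat{\theta}^*$ itself depends on $\hat{y}$ through the $\min$, so one must be careful that the first-stage stationarity condition in $\hat{y}$ is the right one --- this is exactly what Lemma \ref{contdiff} buys us, since it certifies that the CVaR term is $C^1$ in $\hat{y}$ despite the variable integration limit, so the envelope-type argument goes through and no extra boundary term appears. A secondary subtlety is justifying the interchange of differentiation and the (infinite-dimensional) minimization over $\hat{z}(\cdot)$, i.e. that we may substitute the closed-form $\hat{z}^*$ and then differentiate; this is handled by the equivalence in Lemma \ref{lem2stage1stage} together with the explicit reduced-cost formula \eqref{mincost}, which lets us work with the finite-dimensional reduced problem in $(\hat{x},\hat{y})$ and only afterwards read off $\hat{z}^*$ and $P_2^*$. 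Everything else is routine substitution into the KKT system.
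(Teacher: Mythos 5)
Your proposal is correct and follows essentially the same route as the paper: write the KKT conditions of the single-stage (SPP) with the piecewise weight $\hat{c}_{\epsilon,\alpha}(w)$, define $P^*_2(w)$ by dividing $\hat{\mu}^*(w)$ by that weight and set $P^*_1=\hat{\lambda}^*$, and observe that the generators' KKT systems then coincide with the planner's, with market clearing coming from (SPP) feasibility. Your added remarks on the $C^1$ issue at $\hat{\theta}^*$ (via Lemma \ref{contdiff}) and on nonnegativity of the prices are consistent with, and slightly more explicit than, the paper's argument.
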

\begin{proof}
By Lemma \ref{contdiff}, the objective and all constraints in (SPP) are continuously differentiable. Problem (\ref{central})-(\ref{recourseconstr_rewrite}) has Lagrangian 
\begin{equation}\nonumber\begin{split}
\mathcal{L}&= \sum_{i}a_i\hat{x}^2_i +(1-\epsilon)\int_{0}^{\hat{y}}\sum_i\tilde{a}_i\hat{z}^2_i(w)\,f_W(w)\,dw \\
&+ \frac{\epsilon}{1-\alpha}\int_{0}^{\hat{\theta}}\sum_i\tilde{a}_i\hat{z}^2_i(w)\,f_W(w)\,dw\\
& + \hat{\lambda}\left(D-\sum_i\hat{x}_i -\hat{y}\right)\\
& + \int\hat{\mu}(w)\left(\hat{y} - w - \sum_i\hat{z}_i(w)\right)f_W(w)\,dw.
\end{split}
\end{equation}
Let 
\begin{equation}
\hat{c}_{\epsilon,\alpha}(w) = \begin{cases}
1-\epsilon + \frac{\epsilon}{1-\alpha}& 0\leq w\leq \hat{\theta}^*\\
1-\epsilon &\hat{\theta}^*<w< \hat{y}^*\\
0&\hat{y}^*\leq w
\end{cases}.
\end{equation}
Then, in addition to feasibility, the optimality conditions for (\ref{central})-(\ref{recourseconstr_rewrite}) are \cite{ShapDentRusz09}:
\begin{align}
\label{SPPKKT1}2a_i\hat{x}^*_i - \hat{\lambda}^* &\geq 0\quad\forall\,i\\
\label{SPPKKT2}\hat{x}^*_i\left(2a_i\hat{x}^*_i - \hat{\lambda}^*\right) &= 0\quad\forall\,i\\
\label{SPPKKT3}-\hat{\lambda}^* + \int\hat{\mu}^*(w)f_W(w)\,dw &\geq 0\\
\label{SPPKKT4}\hat{y}^*\left(-\lambda^* + \int\hat{\mu}^*(w)f_W(w)\,dw \right) &= 0\\
\label{SPPKKT5}2\tilde{a}_i\hat{c}^*_{\epsilon,\alpha}(w)\hat{z}^*_i(w) - \hat{\mu}^*(w)&\geq 0\quad\forall\,w\\
\label{SPPKKT6}\hat{z}^*_i(w)\left(2\tilde{a}_i\hat{c}^*_{\epsilon,\alpha}(w)\hat{z}^*_i(w) - \hat{\mu}^*(w)\right)&= 0\quad\forall\,w\\
\label{SPPKKT7}\hat{\mu}^*(w)\left(\hat{y}^* - w - \sum_i\hat{z}^*_i(w)\right)&=0\quad\forall\,w,\\
\label{SPPKKT8}\hat{\mu}^*(w)&\geq 0\quad\forall\,w.
\end{align}

In addition to feasibility, the optimality conditions for $(\text{GEN}_i)$ are 
\begin{align}
\label{GENKKT1}2\tilde{a}_ix^{G*}_i-P_1  &\geq 0\\
\label{GENKKT2}x^{G*}_i\left(2\tilde{a}_ix^{G*}_i-P_1 \right) &= 0.\\
\label{GENKKT3}2\tilde{a}_iz^{G*}_i(w)-P_2(w)  &\geq 0\quad\forall\,w\\
\label{GENKKT4}z^{G*}_i(w)\left(2\tilde{a}_iz^{G*}_i(w)-P_2(w)\right) &= 0\quad\forall\,w.
\end{align}
In view of optimality conditions (\ref{SPPKKT5}) and (\ref{SPPKKT6}), we choose the following price schedule
\begin{equation}\nonumber\begin{split}
P_2(w) = \begin{cases}
\frac{\hat{\mu}^*(w)}{\left((1-\epsilon)+ \frac{\epsilon}{1-\alpha}\right)}& 0\leq w\leq \hat{\theta}^*\\
\frac{\hat{\mu}^*(w)}{(1-\epsilon)}& \hat{\theta}^*\leq w< \hat{y}^*\\
0& \hat{y}^*\leq w
\end{cases},\quad P_1  &= \hat{\lambda}^*.
\end{split}
\end{equation}
Given these choices, for each $i$, the optimality conditions for ($\text{GEN}_i$) become 
\begin{align}
\label{GENKKTsub1}2a_ix^{G*}_i - \hat{\lambda}^* &\geq 0\quad\forall\,i\\
\label{GENKKTsub2}x^{G*}_i\left(2a_ix^{G*}_i - \hat{\lambda}^* \right) &= 0\quad\forall\,i\\
\label{GENKKTsub3}2\tilde{a}_i\hat{c}_{\epsilon,\alpha}(w)z^{G*}_i(w) - \hat{\mu}^*(w)&\geq 0\quad\forall\,w\\
\label{GENKKTsub4}z^{G*}_i(w)\left(\hat{c}_{\epsilon,\alpha}(w)z^{G*}_i(w) - \hat{\mu}^*(w)\right)&= 0\quad\forall\,w.
\end{align}
Choosing $x^{G*}_i = \hat{x}^*_i$ for all $i$ and $z^{G*}_i(w) = \hat{z}^*_i(w)$ for all $i$ and $w$, (\ref{GENKKTsub1}) and (\ref{GENKKTsub2}) become identical to (\ref{SPPKKT1}) and (\ref{SPPKKT2}), and (\ref{GENKKTsub3}) and (\ref{GENKKTsub4}) become identical to (\ref{SPPKKT5}) and (\ref{SPPKKT6}). 

Therefore $x^{G*}_i = \hat{x}^*_i$ for all $i$, and $z^{G*}_i(w) = \hat{z}^*_i(w)$ for all $i$ and $w$, and the selected prices, together with $(\hat{x}^*_i,\hat{z}^*_i(w))$ for all $i$ and $w$  constitute an SCEq, and we have shown by construction the existence of an SCEq.\end{proof}

Assuming $\hat{z}^*_i(w)>0$ for any $i$ (and therefore for all $i$), the second stage price given in (\ref{equilprices}) can be rewritten in terms of the social planner's primal decision variables and the level of renewable generation. Rearranging the term in parenthesis in (\ref{SPPKKT6}) gives 
\begin{equation}\label{muexp}\hat{z}^*_i(w) = \frac{\hat{\mu}^*(w)}{\hat{c}_{\epsilon,\alpha}(w)}\quad \forall\,w\leq \hat{y}^*.\end{equation}
Summing both sides of (\ref{muexp}) over $i$ and using constraint (\ref{recourseconstr}) gives 
\begin{equation}\label{P_2rewrite}\hat{y}^*-w = \frac{\hat{\mu}^*(w)}{2\tilde{a}\cdot\hat{c}_{\epsilon,\alpha}(w)}\implies \frac{\hat{\mu}^*(w)}{\hat{c}_{\epsilon,\alpha}(w)} = 2\tilde{a}(\hat{y}^*-w).\end{equation}
Thus when $0\leq \epsilon<1$, we have $$P^*_2(W)  = 2\tilde{a}\cdot\left[\hat{y}^*-W\right]_+.$$
Given that $\hat{x}^*_i>0$ for any $i$ (and therefore for all $i$), a similar calculation gives 
$$\hat{\lambda}^* =  P^*_1 = 2a(D-\hat{y}^*),$$
where $a = \left(\sum_i\frac{1}{a_i}\right)^{-1}$. 

We now address the case where $\epsilon=1$, as prices given in the statement of Theorem \ref{compequilexist} cannot be applied directly in the case where $\hat{\theta}^* < \hat{y}^*$. Consider a sequence $\{\epsilon(k)\}$, where $\lim_{k\to\infty}\epsilon(k) = 1$. Then, suppressing the dependence of $\hat{\mu}(w)$ on $\epsilon$, and taking the limit as $k\to\infty$ on both sides of (\ref{P_2rewrite}) gives 
\begin{equation}\begin{split}\lim_{k\to\infty}\frac{\hat{\mu}^*(w)}{\hat{c}^*_{\epsilon(k),\alpha}(w)} 
&=2\tilde{a}\cdot\left(\lim_{k\to\infty}\hat{y}^*(\epsilon(k))-w\right).
\end{split}\end{equation}
The limit $\lim_{k\to\infty}\hat{y}^*(\epsilon(k))$ exists, as (SPP) may be solved for the case where $\epsilon=1$, and the optimal solution is unique given our assumptions on the generator cost function form. 

Therefore, it still holds in the case where $\epsilon=1$ that $P^*_2(W) = 2\tilde{a}\cdot[\hat{y}^*-W]_+$, and in turn a competitive equilibrium is given by $(\hat{x}^*,\hat{z}^*(\cdot), P^*_1,P^*_2(\cdot))$, where $P^*_1 = \hat{\lambda}^*$ and $P^*_2(W) = 2\tilde{a}\cdot[\hat{y}^*-W]_+$. Finally we give the following lemma on continuity of the equilibrium prices in $\epsilon$. 

\begin{lemma}\label{pricecontinuity}
The equilibrium prices given in Theorem \ref{compequilexist} are continuous in $\epsilon\in[0,1]$. 
\end{lemma}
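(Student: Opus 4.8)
The plan is to reduce the claim to continuity of a single scalar --- the optimal renewable schedule $\hat y^{*}$ --- as a function of $\epsilon$. The discussion preceding the lemma shows that for every $\epsilon\in[0,1]$ the equilibrium prices of Theorem \ref{compequilexist} may be written as $P^{*}_{1}=\hat\lambda^{*}=2a(D-\hat y^{*})$ and $P^{*}_{2}(w)=2\tilde a\,[\hat y^{*}-w]_{+}$ for all $w$, so the apparent blow-up of the piecewise formula (\ref{equilprices}) as $\epsilon\to 1$ is illusory. Since $x\mapsto[x]_{+}$ is continuous and $\hat y^{*}$ is the only $\epsilon$-dependent quantity appearing in these two expressions, it suffices to prove that $\epsilon\mapsto\hat y^{*}(\epsilon)$ is continuous on $[0,1]$.

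First I would eliminate $\hat x$ and $\hat z(\cdot)$ from (SPP) to obtain a one-dimensional problem in $\hat y$. For fixed $s\ge 0$ we have $\min\{\sum_{i}a_{i}\hat x_{i}^{2}:\sum_{i}\hat x_{i}=s,\ \hat x\ge 0\}=a s^{2}$ with $a=(\sum_{i}1/a_{i})^{-1}$, and by (\ref{mincost}) the minimal recourse cost given $\hat y$ and $W=w$ is $\tilde a[\hat y-w]_{+}^{2}$; hence, using (\ref{CVaRintegral}), (SPP) is equivalent to minimizing
\begin{equation}\nonumber
g(\hat y,\epsilon):=a(D-\hat y)^{2}+(1-\epsilon)\,\mathbb E\big[\tilde a[\hat y-W]_{+}^{2}\big]+\epsilon\,\text{CVaR}_{\alpha}\big(\tilde a[\hat y-W]_{+}^{2}\big)
\end{equation}
over $\hat y\in[0,D]$, the interval $[0,D]$ being exactly the projection of the feasible set onto the $\hat y$-coordinate (from $\hat x\ge 0$, $\hat y\ge 0$ and (\ref{firststageconstr})). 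The set of minimizers of $g(\cdot,\epsilon)$ coincides with the set of optimal $\hat y$ for (SPP), since replacing $\hat x$ and $\hat z(\cdot)$ by their optimal choices for a fixed $\hat y$ weakly decreases the (SPP) objective.

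Next I would record the two structural facts needed. For each fixed realization $w$, $\hat y\mapsto\tilde a[\hat y-w]_{+}^{2}$ is convex (the convex map $\hat y\mapsto[\hat y-w]_{+}$ composed with the convex nondecreasing $t\mapsto t^{2}$ on $\mathbb R_{+}$), so the expectation term is convex and, by Theorem \ref{CVaRconvexinx}, so is the $\text{CVaR}_{\alpha}$ term; since $a(D-\hat y)^{2}$ is strictly convex, $g(\cdot,\epsilon)$ is strictly convex on the compact interval $[0,D]$, and $\hat y^{*}(\epsilon)$ is well defined and unique for every $\epsilon$. Moreover $g$ is affine (hence continuous) in $\epsilon$ and, by Lemma \ref{contdiff} together with the elementary continuity of $\hat y\mapsto\mathbb E[\tilde a[\hat y-W]_{+}^{2}]$, continuous in $\hat y$; thus $g$ is jointly continuous on $[0,D]\times[0,1]$. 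The continuity of $\hat y^{*}(\cdot)$ then follows from the standard argmin-continuity argument: given $\epsilon_{k}\to\epsilon$, compactness of $[0,D]$ lets us extract from any subsequence a further subsequence along which $\hat y^{*}(\epsilon_{k_{j}})\to\bar y$; passing to the limit in $g(\hat y^{*}(\epsilon_{k_{j}}),\epsilon_{k_{j}})\le g(\hat y,\epsilon_{k_{j}})$ via joint continuity shows $\bar y$ minimizes $g(\cdot,\epsilon)$, so $\bar y=\hat y^{*}(\epsilon)$ by uniqueness, and since every subsequence has a sub-subsequence with this limit, $\hat y^{*}(\epsilon_{k})\to\hat y^{*}(\epsilon)$. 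Finally $P^{*}_{1}(\epsilon)=2a(D-\hat y^{*}(\epsilon))$ and, for each fixed $w$, $\epsilon\mapsto 2\tilde a[\hat y^{*}(\epsilon)-w]_{+}$ are compositions of continuous functions, which gives the lemma.

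I expect the main obstacle to be the bookkeeping in the scalar reduction rather than the continuity argument itself: one must check that the elimination of $\hat x$ and $\hat z(\cdot)$ is valid uniformly in $\epsilon$, including the boundary cases $\epsilon=0$ and $\epsilon=1$ (for the latter using, as in the text, that (SPP) has a unique solution there and taking limits along a sequence $\epsilon(k)\to 1$), and that the closed forms $P^{*}_{1}=2a(D-\hat y^{*})$ and $P^{*}_{2}(w)=2\tilde a[\hat y^{*}-w]_{+}$ persist at corner solutions $\hat y^{*}\in\{0,D\}$, where some $\hat x_{i}^{*}$ or all $\hat z_{i}^{*}(w)$ vanish. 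Once the price map is exhibited as a continuous function of the unique minimizer of a jointly continuous, strictly convex (in $\hat y$) scalar program over a fixed compact interval, continuity in $\epsilon$ is routine.
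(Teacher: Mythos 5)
Your proof is correct, but it takes a genuinely different route from the paper's. The paper works with (SPP) in the full decision variables $(\hat{x},\hat{z}(\cdot))$ and invokes parametric-programming machinery: it verifies a local compactness condition and a Slater-type constraint qualification for the ($\epsilon$-independent) feasible set, cites a result from the lecture notes of Still to conclude that the optimal solution set $S(\epsilon)$ is outer semicontinuous, upgrades this to continuity via uniqueness of the minimizer (strict convexity), and then notes that the prices depend continuously on the primal optimum. You instead collapse (SPP) to a scalar program in $\hat{y}$ over the fixed compact interval $[0,D]$, using the closed forms $as^{2}$ for the first-stage cost and $\tilde{a}[\hat{y}-w]_{+}^{2}$ for the recourse cost, and then run the elementary subsequence/uniqueness argmin-continuity argument by hand. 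Both arguments ultimately rest on the same two pillars --- strict convexity giving a unique minimizer, and the formulas $P^{*}_{1}=2a(D-\hat{y}^{*})$, $P^{*}_{2}(w)=2\tilde{a}[\hat{y}^{*}-w]_{+}$ exhibiting the prices as continuous functions of that minimizer. What your reduction buys is self-containedness and the avoidance of a delicate point the paper glosses over: convergence of the function-valued component $\hat{z}^{*}_{\nu}(\cdot)$ requires specifying a topology on the space of recourse policies, whereas your argument lives entirely in $\mathbb{R}$. What the paper's route buys is robustness: it does not depend on the quadratic-cost closed forms, so it would extend to the general convex costs mentioned in the conclusion. One shared loose end, which you flag explicitly and the paper does not, is that the closed-form price expressions are derived under interiority ($\hat{x}^{*}_{i}>0$, $\hat{z}^{*}_{i}(w)>0$), so corner solutions $\hat{y}^{*}\in\{0,D\}$ would need the separate treatment you indicate; this is not a defect relative to the paper's own argument, which implicitly makes the same assumption.
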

\begin{proof}
See Appendix.
\end{proof}



\section{Two-Stage Mechanism for Risk Aware Electricity Market with Renewable Generation}
In the proof of Theorem \ref{compequilexist}, it was shown that the SCEq prices arise as optimal dual solutions to (SPP). If we assume that the generators are not strategic, and that all participants know the distribution of $W$, then the following mechanism implements the SCEq: \begin{enumerate}
\item[(1)] Each generator $i$ submits cost function coefficients $a_i$ and $\tilde{a}_i$.
\item[(2)] The ISO solves (SPP), and announces stage 1 price $P^*_1$ and stage 2 price schedule $P^*_2(\cdot)$ as given by (\ref{equilprices}). 
\item[(3)] Generator $i$ solves $(\text{GEN1}_i)$ and receives $P^*_1\overline{x}^{G*}_i$. 
\item[(4)] At the start of stage 2, the renewable generation output $W=w$ is observed by the generators. Generator $i$ solves $(\text{GEN2}_i)$ and pays $P^*_2(w)\overline{z}^{G*}_i(w)$. 
\item[(5)] Generator $i$ produces $\overline{x}^{G*}_{i}+ \overline{z}^{G*}_{i}(w)$. 
\end{enumerate}

\section{Conclusion}

In this paper we consider a two-stage electricity market model with a single customer and multiple generators, taking into account the risk preferences of the customer while assuming that the generators are risk neutral. Our goal has been to determine if a sequential competitive equilibrium exists in such a market, given this discrepancy in risk attitude. We show that such an equilibrium does exist by formulating the risk aware stochastic economic dispatch market as a two-stage stochastic program, and solving this problem to determine equilibrium energy procurements and prices. The equilibrium prices directly reflect the social planner's risk attitude. Given these prices, we specify a market mechanism for implementation of the equilibrium, assuming that the generators are not strategic. In future work we will incorporate network topology, multiple consumers, and strategic behavior in both the generators and consumers and general convex cost functions. 

\bibliography{ACC2020}
\bibliographystyle{plain}

\appendices
\section{Proof of Lemma \ref{lem2stage1stage}}
Since $\text{CVaR}_{\alpha}$ is a coherent risk function, by Proposition 6.5 of \cite{ShapDentRusz09} it is continuous. Also, $\text{CVaR}_{\alpha}$ is clearly a proper, monotonic risk function. 

For every possible realization $W=w$ and first stage decision $\hat{x}$, there always exists a feasible solution to second stage problem (\ref{secondstagestart})-(\ref{recourseconstr}), so $c^{\text{SPP}}_2(\hat{x},W)$ is finite with probability 1 for all feasible $\hat{x}$. Together with the quadratic forms of the first and second stage cost functions, this implies that $c^{\text{SPP}}_2(\hat{x},W)\in\mathcal{L}_1(\Omega,\mathcal{F},P)$ for all feasible $\hat{x}$.

Therefore by Proposition 6.37 of \cite{ShapDentRusz09} we can write 
\begin{equation}\text{CVaR}_{\alpha}(c^{\text{SPP}}_2(\hat{x},W)) = \inf_{\hat{z}(\cdot)\in \mathcal{G}(x,\cdot)}\,\text{CVaR}_{\alpha}\left(\sum_i\tilde{a}_i(\hat{z}_i(\cdot))\right),\end{equation}
where $\hat{z}(\cdot)\,:\,\Omega\to\mathbb{R}^N$. $\hat{z}(\cdot)\in\mathcal{G}(x,\cdot)$ denotes that $\hat{z}(w)$ is a feasible choice for the constraint set in problem (\ref{secondstagestart})-(\ref{recourseconstr}), given first stage decision $(\hat{x},\hat{y})$, for any $w$. 

The argument for interchanging expectation and minimization over $\hat{z}(\cdot)$ follows from the interchangeability principle (Theorem 7.80 in \cite{ShapDentRusz09}).$\hfill\blacksquare$

\section{Proof of Lemma \ref{pricecontinuity}} Let $F(\epsilon)$ denote the feasible set of (SPP), given parameter $\epsilon\in[0,1]$. From \cite{still2018lectures}, the local compactness (LC) of $F$ at some $\overline{\epsilon}$ is satisfied if there exists a $\delta>0$ and compact set $C_0$ such that $$\bigcup_{\|\epsilon-\overline{\epsilon}\|\leq \delta}F(\epsilon)\subset C_0.$$ Observing (SPP) is equivalent to a problem with the same objective and constraints, with the additional constraints that $\sum_i\hat{x}_i \leq D$, $\hat{y}\leq D$ and $\sum_i\hat{z}_i(w)\leq M$ for a large enough finite $M$, and that the feasible set of (SPP) does not depend upon $\epsilon$, LC is satisfied for any $\epsilon\in[0,1]$. 

From \cite{still2018lectures} the constraint qualification (CQ) holds for $F(\epsilon)$ at some $(\overline{\hat{x}},\overline{\hat{z}}(\cdot),\overline{\epsilon})$ with $(\overline{\hat{x}},\overline{\hat{z}}(\cdot))\in F(\overline{\epsilon})$ if there is a sequence $\{\hat{x}_{\nu},\hat{z}_{\nu}(\cdot)\}_{\nu\in\mathbb{N}}\to (\overline{\hat{x}},\overline{\hat{z}}(\cdot))$ such that, given $(\hat{x}_{\nu},\hat{z}_{\nu}(\cdot))$, (\ref{recourseconstr}) is satisfied with strict inequality for all $\nu$. Clearly this condition holds for all $\epsilon\in[0,1]$. 

Define the optimal solution set of (SPP), given $\epsilon$, as $S(\epsilon)$. Then, since LC and CQ are satisfied for all $\epsilon\in[0,1]$ by Lemma 5.6 in \cite{still2018lectures}, $S(\epsilon)$ is \textit{outer semicontinuous}, meaning that for all sequences $\{\hat{x}_{\nu}^*,\hat{z}^*(\cdot)_{\nu}, \epsilon_{\nu}\}_{\nu\in\mathbb{N}}$, with $\epsilon_{\nu}\to\overline{\epsilon}$ and $(\hat{x}_{\nu}^*,\hat{z}^*(\cdot)_{\nu})\in S(\epsilon_{\nu})$, there exists an $(\overline{\hat{x}}^*,\overline{\hat{z}}^*(\cdot))\in S(\overline{\epsilon})$ such that $\|\hat{x}_{\nu}^*-\overline{\hat{x}}^*\|\to0$ and $\|\hat{z}^*(\cdot)_{\nu}-\overline{\hat{z}}^*(\cdot)\|\to 0$ for $\nu\to\infty$. 

Due to the strict convexity of the first and second stage cost functions, the objective of (SPP) is strictly convex, so that when an optimal solution $(\hat{x}^*,\hat{z}^*(\cdot))$ exists, it is unique. Therefore, outer semicontinuity of the optimal primal solutions in $\epsilon$ is equivalent to continuity in $\epsilon$. Since the equilibrium prices depend continuously on the primal solutions to (SPP), the prices themselves are continuous at any $\epsilon\in[0,1]$. $\hfill\blacksquare$ 

\end{document}